  \providecommand\BibTeX{{%
    \normalfont B\kern-0.5em{\scshape i\kern-0.25em b}\kern-0.8em\TeX}}}
\newtheorem{Theorem}{Theorem}
\newtheorem{Lemma}{Lemma}
\newtheorem{Example}{Example}
\newtheorem{Problem}{Problem}
\newtheorem{Remark}{Remark}
\newtheorem{Assumption}{Assumption}
\newcommand{\sign}{\textup{sign}}
\newcommand{\bequ}{\begin{eqnarray}}
\newcommand{\eequ}{\end{eqnarray}}
\DeclareRobustCommand{\rchi}{{\mathpalette\irchi\relax}}
\newcommand{\irchi}[2]{\raisebox{\depth}{$#1\chi$}}
\begin{document}

\title[FxT Distributed optimization]{Accelerating Distributed Optimization via Fixed-time Convergent Flows: Extensions to Non-convex Functions and Consistent Discretization}


\author{Kunal Garg}
\affiliation{%
 \institution{University of Calfornia, Santa Cruz}
 \streetaddress{1156 High Street}
 \city{Santa Cruz}
 \state{California}
 \country{USA}}

\author{Mayank Baranwal}
\affiliation{%
  \institution{TCS}
  \streetaddress{Address}
  \city{Mumbai}
  \state{Maharashtra}
  \country{India}}

\renewcommand{\shortauthors}{Garg and Baranwal}
\begin{abstract}
Distributed optimization has gained significant attention in recent years, primarily fueled by the availability of a large amount of data and privacy-preserving requirements. This paper presents a fixed-time convergent optimization algorithm for solving a potentially non-convex optimization problem using a first-order multi-agent system. Each agent in the network can access only its private objective function, while local information exchange is permitted between the neighbors. The proposed optimization algorithm combines a fixed-time convergent distributed parameter estimation scheme with a fixed-time distributed consensus scheme as its solution methodology. The results are presented under the assumption that the team objective function is strongly convex, as opposed to the common assumptions in the literature requiring each of the local objective functions to be strongly convex. The results extend to the class of possibly non-convex team objective functions satisfying only the Polyak-\L ojasiewicz (PL) inequality. It is also shown that the proposed continuous-time scheme, when discretized using Euler's method, leads to consistent discretization, i.e., the fixed-time convergence behavior is preserved under discretization. Numerical examples comprising large-scale distributed linear regression and training of neural networks corroborate our theoretical analysis.
\end{abstract}

\begin{CCSXML}
<ccs2012>
<concept>
<concept_id>10003033.10003068.10003073.10003075</concept_id>
<concept_desc>Networks~Network control algorithms</concept_desc>
<concept_significance>500</concept_significance>
</concept>
 <concept>
  <concept_id>10010520.10010575.10010755</concept_id>
  <concept_desc>Computer systems organization~Redundancy</concept_desc>
  <concept_significance>300</concept_significance>
 </concept>
 <concept>
  <concept_id>10010520.10010553.10010554</concept_id>
  <concept_desc>Computer systems organization~Robotics</concept_desc>
  <concept_significance>100</concept_significance>
 </concept>
 <concept>
  <concept_id>10003033.10003083.10003095</concept_id>
  <concept_desc>Networks~Network reliability</concept_desc>
  <concept_significance>100</concept_significance>
 </concept>
</ccs2012>
\end{CCSXML}



\keywords{Distributed optimization; Fixed-time stability; Consistent discretization; Multi-agent systems}


\maketitle

\section{Introduction}
Over the past decade, distributed optimization problems over a peer-to-peer network have received considerable attention due to the size and complexity of the dataset, privacy concerns, and communication constraints among multiple agents \cite{nedic2015distributed,lin2017distributed,pan2018distributed}. These distributed convex optimization problems take the following form:

\begin{align}\label{eq:prob}
    \min\limits_{\mathbf{x}\in\mathbb{R}^d} \ \ F(\mathbf{x}) = \sum\limits_{i=1}^Nf_i(\mathbf{x}),
\end{align}
where $F(\cdot)$ is the team objective function, and the convex function $f_i: \mathbb{R}^d\to\mathbb{R}$ represents the local objective function of the $i^\text{th}$ agent, where $i \in\{1,2, \ldots, N\}$ for some positive integer $N$. Distributed optimization problems find applications in several domains including, but not limited to, sensor networks~ \cite{rabbat2004distributed}, satellite tracking~\cite{hu2016smooth}, and large-scale machine learning~\cite{nathan2017optimization}. Distributed optimization problems facilitate distributed coordination among the agents, as well as minimization of the team objective function. Consequently, these problems are inherently more complex than other multi-agent control problems, such as, distributed consensus.

In recent years, the use of continuous-time dynamical systems for distributed optimization has emerged as a viable alternative \cite{lin2017distributed,pan2018distributed,feng2017finite,hu2018distributed,su2014differential}. This viewpoint enables the use of tools from Lyapunov theory and differential equations for the analysis and design of optimization procedures. It is worth mentioning that most of the existing continuous-time schemes for distributed optimization are only asymptotically (or exponentially at best) convergent. On the other hand, most practical multi-agent optimization tasks, such as distributed economic dispatch, often undergo frequent changes in operating conditions, thereby requiring the optima to be achieved in a finite amount of time.

The notion of finite-time convergence in optimization is closely related to finite-time stability \cite{bhat2000finite} in control theory. In contrast to asymptotic stability (AS), finite-time stability is a concept that guarantees the convergence of solutions in a finite amount of time. In~\cite{lu2012zero}, a continuous-time zero-gradient-sum (ZGS) with an exponential convergence rate was proposed, which, when combined with a finite-time consensus protocol, was shown to achieve finite-time convergence in~\cite{feng2017finite}. A drawback of ZGS-type algorithms is the requirement of strong convexity of the local objective functions and the choice of specific initial conditions $x_i(0)$ for each agent $i$ such that $\sum_{i=1}^N\nabla f_i(x_i(0))=0$. In \cite{lin2017distributed}, a novel continuous-time distributed optimization algorithm, based on private (nonuniform) gradient gains, was proposed for convex functions with quadratic growth and achieved convergence in a finite time. A finite-time tracking and consensus-based algorithm were recently proposed in~\cite{hu2018distributed}, which again achieves convergence in a finite time under a time-invariant communication topology.

Fixed-time stability (FxTS) \cite{polyakov2012nonlinear} is a stronger notion than finite-time stability (FTS), where the time of convergence does not depend upon the initial condition. To the best of our knowledge, distributed optimization procedures with fixed-time convergence have not been addressed in the literature for a general class of non-linear, potentially non-convex, objective functions. The use of FxTS theory for distributed optimization was first investigated in~\cite{garg2020fixed} where centralized optimization problems were studied. The authors in~\cite{wang2020distributed} further specialized it to the case of strongly convex functions, however, at the expense of using a Hessian-based (second-order) schemes that do not scale well with the dimension $d$ of the underlying state-space. Moreover, the distributed protocol in~\cite{wang2020distributed} requires each of the individual private objective functions to be strongly convex. In the particular case of quadratic objective functions, the scheme proposed in~\cite{garg2020fixed} can be suitably modified to incorporate both inequality and equality constraints~\cite{baranwal2020robust}.

Despite growing interests in the use of continuous-time dynamical systems towards distributed optimization with fixed-time convergence guarantees, the existing literature makes various simplifying assumptions, including but not limited to, requiring agents to satisfy ZGS condition, use of second-order (Hessian-based) optimization schemes, necessitating all private objective functions to be strongly convex or with bounded growth, and existence of a time-invariant communication topology. Most of these requirements limit the power of fixed-time convergent dynamical systems towards being adopted for practical cooperative multi-agent control problems. Finally, prior work does not discuss how efficient their proposed methods are during implementation using iterative, discrete methods. It is worth noting that while continuous-time dynamical systems are studied for ease of understanding the behavior of an optimization algorithm, in practice, it is inevitable to use a discrete-time, iterative method to solve optimization problems. In light of this, inspired from the work in \cite{polyakov2019consistent,benosman2020optimizing} and using the recent results from \cite{garg2021MVIP}, we show that our proposed method leads to a \textit{consistent} discretization scheme where the fixed-time convergent behavior is preserved upon discretization using elementary schemes, such as Euler discretization.

In view of the limitations stated above, this paper presents a fixed-time convergent, distributed optimization scheme for first-order multi-agent systems that extend to a broad class of local objective functions under relaxed assumptions on convexity and information to be exchanged with the neighbors. 
The main contributions of the paper are summarized below:
\begin{itemize}
    \item We consider the problem of distributed optimization of the sum of local objective functions, assuming that only the global objective function is strongly convex. Unlike prior works, we do not require each of the local objective functions to be strongly convex. 
    \item The results are extended to a class of possibly non-convex functions satisfying only the Polyak-\L ojasiewicz (PL) inequality. PL inequality is a relaxation of strong-convexity and is popularly used to design exponentially stable gradient-flows in the centralized optimization problems \cite{hassan2019proximal,garg2018fixed}.
    To the best of the authors' knowledge, this is the first work that utilizes this condition in distributed optimization. 
    \item We show that trajectories of dynamics obtained by discretizing the proposed continuous-time dynamics using Euler discretization converge to an arbitrarily small neighborhood of the optimal point within a fixed number of iterations, leading to a \textit{consistent} discretization. This is a rather significant result as it bridges the gap between the continuous-time analysis and discrete-time implementation and is skipped by almost all of the prior work on the dynamical system-based approach to solving optimization problems.
    \item Finally, we validate the proposed distributed optimization algorithm for decentralized learning of regression parameters in a linear regression task and training deep neural networks for classification on the MNIST dataset.
\end{itemize}

\noindent {\bf A note on mathematical notations}: We use $\mathbb{R}$ to denote the set of real numbers and $\mathbb R_+$ to denote non-negative real numbers. Given a function $f:\mathbb R^d\rightarrow\mathbb R$, the gradient and the Hessian of $f$ at some point $x\in \mathbb R^d$ are denoted by $\nabla f(x)$ and $\nabla^2 f(x)$, respectively. Given $x\in\mathbb{R}^d$, $\|x\|$ denotes the 2-norm of $x$. $\mathcal{G}=(A,\mathcal{V})$ represents an undirected graph with the adjacency matrix $A = [a_{ij}]\in \mathbb R^{N\times N}$, $a_{ij}\in\{0,1\}$ and the set of nodes $\mathcal{V} = \{1, 2, \cdots, N\}$. The set of 1-hop neighbors of node $i\in \mathcal V$ is represented by $\mathcal{N}_i$, i.e., $\mathcal N_i(t) = \{j\in\mathcal{V}\; |\; a_{ij} = 1\}$. The second smallest eigenvalue of a matrix is denoted by $\lambda_2(\cdot)$. We define the function $\sign^\mu: \mathbb R^d\rightarrow\mathbb R^d$ as 
\begin{align}\label{sign func}
    \sign^\mu(x) = x\|x\|^{\mu-1}, \ \ \mu\geq 0,
\end{align}
with $\sign^\mu(0) = 0$. We use $1_N, 0_N\in \mathbb R^N$ to denote vectors consisting of ones and zeros, respectively, of dimension $N$. 

\section{Problem Formulation and Preliminaries}\label{sec:prob_form}
\subsection{Problem statement}
Consider the system consisting of $N$ nodes with graph structure $\mathcal G = (A,\mathcal V)$ specifying the communication links between the nodes for $t\geq 0$. The objective is to find $x^*\in\mathbf{R}^d$ that solves
\begin{align}\label{opt prob}
\begin{split}
    & \min_{x_1, x_2. \cdots, x_N} \sum_{i=1}^{N}f_i(x_i), \\
    \textrm{s.t.}& \; x_1 = x_2 = \cdots = x_N. 
\end{split}
\end{align}
In this work, we assume that the minimizer $x^*=x_1^*=x_2^*=\cdots=x_N^*$ for \eqref{opt prob} exists and is unique.\footnote{Existence and uniqueness of global minimizer is trivially satisfied for a strongly convex team objective function. While the PL inequality (see Assumption \ref{f assum 2}) does not imply convexity, it implies invexity, i.e., the stationary points are global minimizers.
} We make the following assumption on the inter-node communications.
\begin{Assumption}\label{ass:1}
	The communication topology between the agents is connected and undirected, i.e., the underlying graph $\mathcal G = (A,\mathcal V)$ is connected, and $A$ is a symmetric matrix.
\end{Assumption}

To motivate the dynamical system approach considered in this paper, first, let us revisit the gradient decent (GD) method to minimize an unconstrained function $\mathcal F:\mathbb R^n\rightarrow\mathbb R$, given as:
\begin{align*}
    x_{k+1} = x_k -\eta~\nabla \mathcal F(x_k),
\end{align*}
where $\eta>0$ is the step-size. We can re-write the above as $\frac{x_{k+1}-x_k}{\eta} = -\nabla \mathcal F(x_k)$ and in the limit $\eta\to 0$, we obtain the continuous-time equivalent of GD, termed as gradient-flow, given as $\dot x = -\nabla \mathcal F(x)$. More generally, we can write this dynamical system as $\dot x = u$ where $u$ can be designed to solve a given problem (e.g., for unconstrained minimization of $\mathcal F$, $u = -\nabla \mathcal F$ and for constrained minimization of $\mathcal F$ over a convex set $\mathcal C$, one can define $u = -k (x-\textrm{P}_{\mathcal C}(x-\nabla \mathcal F(x)))$ using the projection operator $\textrm{P}_{\mathcal C}$. Inspired from this, we use a dynamical system approach to solve the constrained optimization problem \eqref{opt prob} in a distributed fashion. Let $x_i\in\mathbb{R}^d$ represent the state of agent $i$. We model agent $i$ as a first-order integrator system:
\begin{align}\label{sys flow}
	\dot x_i = u_i,
\end{align}
where $u_i\in\mathbb{R}^d$ can be regarded as a \textit{control input}, that depends upon the states of the agent $i$, and the states of the neighboring agents $j_1, j_2, \cdots, j_l\in \mathcal N_i$.
The problem statement is formally given as follows.\vspace{-.5em}
\begin{Problem}
Design 
$u_i$ for each agent $i\in \mathcal V$, such that $x_1=x_2=\cdots=x_N=x^*$ is achieved under \eqref{sys flow} within a fixed fixed time, for any initial condition $\{x_1(0), x_2(0), \cdots, x_N(0)\}$, where $x^*$ solves~\eqref{opt prob}.
\end{Problem}

\subsection{Preliminaries}
In this subsection, we present relevant definitions and results on FxTS. Consider the system: 
\begin{align}\label{ex sys}
	\dot x = \phi(x),
\end{align}
where $x\in \mathbb R^d$, $\phi: \mathbb R^d \rightarrow \mathbb R^d$ and $\phi(0)=0$. The authors in \cite{polyakov2012nonlinear} presented the following result for fixed-time stability, where the time of convergence is finite and is uniformly bounded for any initial condition $x(0)$.
\vspace{-.5em}
\begin{Lemma}[\cite{polyakov2012nonlinear}]\label{FxTS TH}
Suppose there exists a positive definite, radially unbounded, continuously differentiable function $V:\mathbb R^d\rightarrow\mathbb R$, i.e., $V\in \mathcal C^1$ such that $V(0) = 0$ and $V(x)>0$ for $x\neq 0$, such that the following holds:
\begin{align}\label{eq:polyakov}
    \dot V(x) \leq -aV(x)^p-bV(x)^q, \quad \forall x \neq 0,
\end{align}
with $a,b>0$, $0<p<1$ and $q>1$. Then the origin of \eqref{ex sys} is FxTS, i.e., $x(t) = 0$ for all $t\geq T$, where the settling time $T$ satisfies $T \leq \frac{1}{a(1-p)} + \frac{1}{b(q-1)}$. 
\end{Lemma}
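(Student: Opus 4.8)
The plan is to reduce the vector differential inequality \eqref{eq:polyakov} to a scalar comparison problem, integrate that scalar flow explicitly, and then exploit the fact that the resulting settling-time integral admits a bound that is uniform in the initial condition. First I would observe that since $a,b>0$ and $V(x)>0$ for $x\neq 0$, the right-hand side of \eqref{eq:polyakov} is strictly negative away from the origin, so $V$ is strictly decreasing along any nontrivial trajectory; combined with positive definiteness and $V\in\mathcal C^1$, this yields Lyapunov stability and asymptotic convergence of $x(t)$ to the origin by the standard Lyapunov argument. It then remains only to establish a finite, initial-condition-independent settling-time bound.

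Next I would invoke the scalar comparison principle. Setting $v(t) := V(x(t))$, the inequality \eqref{eq:polyakov} reads $\dot v \leq -a v^p - b v^q$. Introducing the auxiliary initial value problem $\dot y = -a y^p - b y^q$ with $y(0) = v(0)$, the comparison lemma gives $0 \leq v(t) \leq y(t)$ for as long as $y(t) > 0$. Hence it suffices to show that $y$ reaches zero within the claimed time, since then $V(x(t)) = 0$ forces $x(t) = 0$ by positive definiteness, and Lyapunov stability keeps the trajectory pinned at the origin for all later times.

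The core computation is the explicit integration of the scalar flow. Separating variables, the time for $y$ to travel from $y_0 := v(0)$ down to $0$ is
\begin{align*}
T_{y_0} = \int_0^{y_0} \frac{dy}{a y^p + b y^q}.
\end{align*}
The key step, and the one that produces the fixed-time bound, is to split this integral at $y = 1$. On $(0,1]$ one has $y^q \leq y^p$ (since $q > 1 > p$), so $a y^p + b y^q \geq a y^p$ and that contribution is at most $\int_0^1 (a y^p)^{-1}\,dy = \tfrac{1}{a(1-p)}$, which is finite because $p < 1$; on $[1, y_0]$ one has $y^p \leq y^q$, so $a y^p + b y^q \geq b y^q$ and that contribution is at most $\int_1^\infty (b y^q)^{-1}\,dy = \tfrac{1}{b(q-1)}$, which converges precisely because $q > 1$. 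Adding the two pieces gives $T_{y_0} \leq \tfrac{1}{a(1-p)} + \tfrac{1}{b(q-1)}$, a bound independent of $y_0$ and hence of $x(0)$, which is exactly the asserted estimate. For the global claim, radial unboundedness of $V$ makes the sublevel sets compact, so the comparison argument applies from every $x(0) \in \mathbb{R}^d$.

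The step I expect to be the main obstacle is the rigorous justification of the comparison inequality $v(t) \leq y(t)$: because \eqref{eq:polyakov} is only an inequality, with $v$ merely absolutely continuous and the right-hand side not Lipschitz at the origin, one must appeal to a differential-inequality version of the comparison lemma rather than to uniqueness of the scalar ODE, and some care is needed near $y = 0$ where $y^p$ has unbounded derivative. Once this comparison is established, the remaining analytic content, namely the split-at-unity integral bound, is entirely elementary.
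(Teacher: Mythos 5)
Your argument is correct and is essentially the standard proof of this result: the paper itself imports the lemma from the cited reference without proof, and the reference's argument is exactly your reduction to the scalar comparison system $\dot y=-ay^p-by^q$ followed by the split of the settling-time integral at $y=1$, which yields the two terms $\tfrac{1}{a(1-p)}$ and $\tfrac{1}{b(q-1)}$ independently of $y_0$. Your closing remarks on the differential-inequality form of the comparison lemma and on invariance of the origin once $V$ vanishes are the right points to flag, and they are handled in the standard way.
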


Next, we present some well-known results that will be useful in proving our claims on fixed-time parameter estimation and consensus protocols. 
\begin{Lemma}[\cite{zuo2016distributed}]\label{lemma:ineq}
	Let $z_i\in \mathbb R_+$ for $i\in \{1,2,\cdots, N\}$, $N\in \mathbb Z_+$. Then the following hold:
	\begin{subequations}\label{p ineq}
		\begin{align}
		\sum_{i = 1}^N z_i^p& \geq \left(\sum_{i = 1}^Nz_i\right)^p, \; 0<p\leq 1,\\
		\sum_{i = 1}^N z_i^p& \geq N^{1-p}\left(\sum_{i = 1}^Nz_i\right)^p, \; p>1.
		\end{align}
	\end{subequations}
\end{Lemma}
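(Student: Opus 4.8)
The plan is to treat the two exponent regimes separately, reducing each to a one-variable property of the map $t\mapsto t^p$: subadditivity (equivalently, concavity) when $0<p\le 1$, and convexity via Jensen's inequality when $p>1$. In both cases I first dispose of the degenerate situation in which every $z_i=0$, where both sides vanish, so I may assume $S:=\sum_{i=1}^N z_i>0$.

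For the subadditive range $0<p\le 1$, I would normalize by setting $w_i:=z_i/S\in[0,1]$, so that $\sum_i w_i=1$. The key pointwise estimate is that for any $w\in[0,1]$ and exponent $p\le 1$ one has $w^p\ge w$; this follows because $w^p=w\cdot w^{p-1}$ and $w^{p-1}\ge 1$ whenever $w\le 1$ and $p-1\le 0$ (with the boundary case $w=0$ handled by the convention $0^p=0$). Summing over $i$ gives $\sum_i w_i^p\ge\sum_i w_i=1$. Multiplying through by $S^p$ and using $(Sw_i)^p=S^p w_i^p$ then yields $\sum_i z_i^p\ge S^p=\big(\sum_i z_i\big)^p$, which is the first inequality.

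For the range $p>1$ the map $t\mapsto t^p$ is convex on $\mathbb{R}_+$, so Jensen's inequality applied to the uniform average of the $z_i$ gives $\big(\tfrac{1}{N}\sum_i z_i\big)^p\le\tfrac{1}{N}\sum_i z_i^p$. Rearranging, $\big(\sum_i z_i\big)^p\le N^{p-1}\sum_i z_i^p$, i.e. $\sum_i z_i^p\ge N^{1-p}\big(\sum_i z_i\big)^p$, as claimed.

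I do not anticipate a genuine obstacle here: the only point requiring care is the elementary monotonicity claim $w^p\ge w$ on $[0,1]$ for $p\le 1$, which is precisely where the direction of the first inequality originates; everything else is bookkeeping. A purely self-contained alternative that avoids invoking Jensen is induction on $N$, whose base case $N=2$ reduces to the scalar inequalities $(a+b)^p\le a^p+b^p$ and $(a+b)^p\le 2^{p-1}(a^p+b^p)$, each provable by studying the sign of $f(a)=a^p+b^p-(a+b)^p$ on $[0,\infty)$, with a routine recombination in the inductive step. I would nevertheless present the normalization-plus-Jensen argument as the primary proof, since it is shorter and more transparent than the induction.
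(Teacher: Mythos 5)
Your proof is correct and complete: the normalization argument with the pointwise bound $w^p\ge w$ on $[0,1]$ handles the case $0<p\le 1$, and Jensen's inequality for the convex map $t\mapsto t^p$ handles $p>1$, with the degenerate all-zero case correctly set aside. Note that the paper itself gives no proof of this lemma --- it is quoted verbatim from the cited reference --- so there is no in-paper argument to compare against; your write-up is a standard and entirely adequate justification of the stated inequalities.
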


\begin{Lemma}\label{lemma:sign}
	Let $\mathcal G = (A,\mathcal V)$ be an undirected graph consisting of $N$ nodes located at $x_i\in \mathbb R^d$ for $i\in \{1,2,\cdots, N\}$ and $\mathcal N_i$ denotes the in-neighbors of node $i$. Then,  
	\begin{align}\label{sign 0}
	\sum_{i = 1}^N\sum_{j \in \mathcal N_i}\sign(x_i-x_j) = 0.
	\end{align}
\end{Lemma}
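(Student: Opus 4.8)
The plan is to exploit two structural facts: the symmetry of the adjacency relation for an undirected graph, and the oddness of the map $\sign$. First I would rewrite the double summation as a single sum over the set of ordered adjacent pairs
\[
E = \{(i,j) : j \in \mathcal N_i\} = \{(i,j) : a_{ij} = 1\},
\]
so that the left-hand side of \eqref{sign 0} becomes $\sum_{(i,j) \in E} \sign(x_i - x_j)$.

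Next I would invoke the hypothesis that $\mathcal G$ is undirected, so that $A = A^{\mathrm T}$, i.e. $a_{ij} = a_{ji}$. This means the edge set $E$ is symmetric: $(i,j) \in E$ if and only if $(j,i) \in E$. Consequently the ordered pairs in $E$ can be partitioned into unordered pairs $\{(i,j),(j,i)\}$ with $i \neq j$ (any diagonal term $i = j$, if permitted, contributes $\sign(0) = 0$ and is discarded).

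The key computation is then the cancellation within each matched pair. Since $\sign$ is odd — immediate from the definition $\sign^\mu(x) = x\|x\|^{\mu-1}$, which yields $\sign^\mu(-x) = -\sign^\mu(x)$, and in particular $\sign(x_j - x_i) = -\sign(x_i - x_j)$ — each such pair contributes
\[
\sign(x_i - x_j) + \sign(x_j - x_i) = 0.
\]
Summing over all matched pairs gives the claim.

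I do not expect a genuine obstacle here; the result is elementary and follows purely from the structural symmetry encoded in $a_{ij} = a_{ji}$ together with the oddness of $\sign$. The only point requiring mild care is the bookkeeping of the pairing: one must verify that reindexing the double sum over $E$ and grouping $(i,j)$ with $(j,i)$ genuinely partitions $E$, which is precisely where the undirectedness $A = A^{\mathrm T}$ enters, and that the convention $\sign(0) = 0$ correctly handles coincident states $x_i = x_j$.
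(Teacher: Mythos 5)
Your proof is correct: rewriting the double sum over the symmetric set of ordered adjacent pairs, pairing $(i,j)$ with $(j,i)$ (which is legitimate precisely because the graph is undirected), and invoking the oddness $\sign(x_j-x_i)=-\sign(x_i-x_j)$ together with the convention $\sign(0)=0$ gives the cancellation. This is exactly the standard argument the paper implicitly relies on — the lemma is stated there without proof — so there is nothing to add.
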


\begin{Lemma}\label{lemma eij half}
	Let $w:\mathbb R^d\rightarrow \mathbb R^d$ be an odd mapping, i.e., $w(x) = -w(-x)$ for all $x\in \mathbb R^d$ and let the graph $\mathcal G = (A,\mathcal V)$ be undirected. Let $\{x_i\}$ and $\{e_i\}$ be the sets of arbitrary vectors with $i\in \mathcal V$ and $x_{ij} \coloneqq x_i-x_j$ and $e_{ij} \coloneqq e_i-e_j$. Then, the following holds
	\begin{align}\label{f ij  eij}
	    \sum_{i,j= 1}^Na_{ij}e_i^{\intercal}w(x_{ij}) = \frac{1}{2}\!\sum_{i,j= 1}^Na_{ij}e_{ij}^{\intercal}w(x_{ij}).
	\end{align}
\end{Lemma}

\begin{Lemma}[\cite{mesbahi2010graph}]\label{lemma:Laplacian}
	Let $\mathcal G = (A,\mathcal V)$ be an undirected, connected graph. Let $L_A\coloneqq[l_{ij}]\in\mathbb{R}^{N\times N}$ be graph Laplacian matrix defined as $
	l_{ij} = \left\{
	\begin{array}{cc}
    	\sum\limits_{k=1,k\neq i}^{N}a_{ik}, & i=j\\
    	-a_{ij}, & i\neq j
	\end{array}
	\right.$. Then the Laplacian $L_A$ has following properties:\\
	1) $L_A$ is positive semi-definite, $L_A 1_N ={0}_N$, and $\lambda_2(L_A) > 0$.\\
	2) $x^{\intercal}L_Ax=\frac{1}{2}\sum_{i,j=1}^Na_{ij}(x_j-x_i)^2$, and if $1^{\intercal}x=0$, then $x^{\intercal} L_Ax\geq \lambda_2(L_A)x^{\intercal} x$.
\end{Lemma}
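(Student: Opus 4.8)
The plan is to prove part 2) first, since both the positive semi-definiteness claim and the kernel characterization in part 1) follow from the quadratic-form identity directly. First I would expand $x^T L_A x = \sum_{i,j=1}^N x_i l_{ij} x_j$ and split it into diagonal and off-diagonal contributions using the definition of $l_{ij}$, obtaining $\sum_{i} \left(\sum_{k \neq i} a_{ik}\right) x_i^2 - \sum_{i \neq j} a_{ij} x_i x_j$. Separately I would expand $\frac{1}{2}\sum_{i,j} a_{ij}(x_j - x_i)^2 = \frac{1}{2}\sum_{i,j} a_{ij}\left(x_i^2 - 2 x_i x_j + x_j^2\right)$ and, invoking the symmetry $a_{ij} = a_{ji}$ from Assumption \ref{ass:1} to collapse the two pure-square sums, regroup terms to match the previous expression. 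This is routine algebraic bookkeeping and yields the stated identity $x^T L_A x = \frac{1}{2}\sum_{i,j=1}^N a_{ij}(x_j - x_i)^2$.

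With the identity in hand, positive semi-definiteness is immediate: since $a_{ij} \geq 0$ and each $(x_j - x_i)^2 \geq 0$, the right-hand side is nonnegative for every $x$. The relation $L_A 1_N = 0_N$ follows from a one-line row-sum computation, $\sum_{j} l_{ij} = \sum_{k \neq i} a_{ik} - \sum_{j \neq i} a_{ij} = 0$, so $1_N$ lies in the kernel and $0$ is an eigenvalue of $L_A$.

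The main obstacle is establishing $\lambda_2(L_A) > 0$, which is precisely where connectivity enters and is the only non-mechanical step. Here I would argue that $x^T L_A x = 0$ forces, through the identity, $x_i = x_j$ for every pair with $a_{ij} = 1$, i.e. across every edge; since $\mathcal G$ is connected by Assumption \ref{ass:1}, any two nodes are joined by a path, so all coordinates of $x$ must coincide and $x$ must be a scalar multiple of $1_N$. Thus the kernel of $L_A$ is exactly one-dimensional, the eigenvalue $0$ is simple, and the next smallest eigenvalue $\lambda_2(L_A)$ is strictly positive.

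Finally, for the lower bound under $1^T x = 0$, I would appeal to the spectral theorem: $L_A$ is symmetric, so it admits an orthonormal eigenbasis $\{v_1, \ldots, v_N\}$ with eigenvalues $0 = \lambda_1 < \lambda_2 \leq \cdots \leq \lambda_N$ and $v_1$ proportional to $1_N$. The constraint $1^T x = 0$ makes $x$ orthogonal to $v_1$, so expanding $x = \sum_{k=2}^N c_k v_k$ gives $x^T L_A x = \sum_{k=2}^N \lambda_k c_k^2 \geq \lambda_2 \sum_{k=2}^N c_k^2 = \lambda_2(L_A)\, x^T x$, which is the desired Rayleigh-quotient estimate and completes the proof.
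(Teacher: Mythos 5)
Your proof is correct and is the standard argument for these classical facts about the graph Laplacian; the paper itself states Lemma \ref{lemma:Laplacian} without proof, deferring to the graph-theory literature it cites. The quadratic-form expansion, the row-sum computation for $L_A 1_N = 0_N$, the connectivity argument showing the kernel is exactly $\mathrm{span}\{1_N\}$ (hence $\lambda_2(L_A)>0$), and the Rayleigh-quotient bound on the orthogonal complement of $1_N$ are all sound and complete.
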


\section{Main results}\label{sec:strong_convex}
Our approach to fixed-time multi-agent distributed optimization is based on first designing a centralized fixed-time protocol that relies upon global information. Then, the quantities in the centralized protocol are estimated in a distributed manner. In summary, the algorithm proceeds by first estimating global quantities ($g^*$ as defined in \eqref{eq:g_star new}) required for the centralized protocol, then driving the agents to reach consensus ($x_i(t) = x(t)$ for all $i\in \mathcal V$), and finally driving the \textit{common} trajectory $x(t)$ to the optimal point $x^*$, all within a fixed time $T$. Recall that agents are said to have reached consensus on states $x_i$ if $x_i = x_j$ for all $i, j\in \mathcal V$. To this end, we define first a centralized fixed-time protocol. Note that agents' states are driven by the same input under centralized settings and are initialized to the same starting point. In a distributed setting, this behavior translates to agents having already reached consensus and subsequently being driven by a common input (see Remark \ref{remark cent dist}). This section presents a Hessian-free, first-order dynamical system that achieves convergence to the global optimum of strongly convex team objective function in a fixed time. We make the following assumptions for the results in this section. 

\begin{Assumption}\label{ass:2 new}
	Functions $f_i$ are convex, twice differentiable and the Hessian $\nabla^2F(x)=\sum_{i = 1}^N\nabla^2f_i(x) \succeq kI$, where $k>0$, for all $x\in \mathbb R^d$, i.e., function $F$ is strongly convex with modulus $k$.
	\vspace{-.5em}
\end{Assumption}
\begin{Remark}
    Assumption \ref{ass:2 new} can be satisfied even if just one of the objective functions is strongly convex. Assumption \ref{ass:2 new} also implies that $x = x^*$ is a minimizer if and only if it satisfies $\sum_{i=1}^N\nabla f_i(x^*) = 0$.
    \vspace{-.5em}
\end{Remark}
\begin{Assumption}\label{ass:3 new}
    Each node $i$ receives $x_j, \nabla f_j(x_j)$ from each of its neighboring nodes $j\in \mathcal N_i$.
    \vspace{-.5em}
\end{Assumption}

Note that under Assumption~\ref{ass:2 new}, the agents only need to exchange their state values $x_i$ and the gradients $\nabla f_i(x_i)$ with their neighbors, and there is no need to exchange the Hessian values under this framework. We first present a centralized protocol that guarantees solution of \eqref{opt prob} in a fixed time. All the results in the following section assume that Assumption~\ref{ass:1}, \ref{ass:2 new}, \ref{ass:3 new} hold, unless specified otherwise. 

\subsection{Centralized protocol}

\begin{Lemma}[\textbf{Centralized fixed-time protocol}]\label{thm:central new}
Suppose the dynamics of each agent $i\in\mathcal{V}$ in the network is given by
\begin{align}\label{eq:cent_dynamics}
	u_i = g^*, \ \ \ \ x_i(0) = x_j(0) \ \ \ \ \forall \  i,j\in\mathcal{V},
\end{align}
where $g^*$ is defined as:
\begin{align}\label{eq:g_star new}
    g^*(x) &= -\left(\sum_{i = 1}^N \nabla f_i(x) + \sign^{l_1}\left(\sum_{i = 1}^N\nabla f_i(x)\right)\right.\nonumber\\
    &\qquad \left.+\sign^{l_2}\left(\sum_{i = 1}^N\nabla f_i(x)\right) \right)
\end{align}
where $l_1>1$ and $0<l_2<1$, and $x_i = x$ for each $i\in\mathcal{V}$, for all $t\geq 0$. Then the trajectories of all agents converge to the optimal point $x^*$, i.e., the minimizer of the team objective function \eqref{opt prob} in a fixed time $T_{sc}>0$.
\end{Lemma}

\begin{proof}
Consider a candidate Lyapunov function: $$V(x)=\frac{1}{2}\left(\sum_{i = 1}^N\nabla f_i(x)\right)^{\intercal}\left(\sum_{i = 1}^N\nabla f_i(x)\right).$$ 
By taking its time-derivative along \eqref{eq:cent_dynamics}, we obtain:

\begin{align*}
	\dot V(x) &= \left(\sum_{i = 1}^N\nabla f_i(x)\right)^{\intercal}\left(\sum_{i = 1}^N\nabla^2 f_i(x)\dot{x}\right) \\
	& = -\left(\sum_{i = 1}^N\nabla f_i\right)^{\intercal}\left(\sum_{i = 1}^N\nabla^2 f_i\right)\left(\sum_{i = 1}^N \nabla f_i \right.\\
	&\qquad \left.+\sign^{l_1}\left(\sum_{i = 1}^N\nabla f_i\right) +\sign^{l_2}\left(\sum_{i = 1}^N\nabla f_i\right)\right),\\
	& \leq -2kV-k\left\|\sum_{i = 1}^N\nabla f_i\right\|^{l_1+1}-k\left\|\sum_{i = 1}^N\nabla f_i\right\|^{l_2+1}\\
	&\leq -k2^\frac{1+l_1}{2}V^{\frac{1+l_1}{2}}-k2^\frac{1+l_2}{2}V^{\frac{1+l_2}{2}},
\end{align*}
where the first inequality follows from the fact that $\sum_{i = 1}^N\nabla^2f_i\succeq kI$. Thus, using Lemma~\ref{FxTS TH}, we have that there exists $T_{sc}<\infty$ such that for all $t\geq T_{sc}$, $x(t) = x^*$ starting from any initial condition.
\end{proof}

The centralized fixed-time protocol inherently assumes that the agents can directly access the global quantity $\sum_{i=1}^N\nabla f_i$. In a distributed setting, this quantity needs to be estimated and is not directly accessible. Before presenting the algorithm to compute this global quantity in a distributed manner, we first present an extension of Lemma~\ref{thm:central new} under further relaxation of Assumption \ref{ass:2 new}. The notion of \textit{gradient-dominance} or Polyak-\L ojasiewicz (PL) inequality has been explored extensively in optimization literature to show exponential convergence. A function $f:\mathbb R^n\rightarrow\mathbb R$ is said to satisfy PL inequality, or is gradient dominated, with $\mu_f>0$ if
\begin{equation}\label{PL ineq}
     \frac{1}{2}\|\nabla f(x)\|^2\geq \mu_f(f(x)-f^*) \quad \forall x \in \mathbb R^n,
\end{equation}
where $f^* = f(x^*)$ is the value of the function at its minimizer $x^*$. We make the following assumption on the team objective function. 

\begin{Assumption}{(\textbf{Gradient dominated})} \label{f assum 2}
The function $F$ is radially unbounded, has a unique minimizer $x = x^*$, and satisfies the PL inequality, or is gradient dominated, i.e., there exists $\mu>0$ such that
\begin{align}\label{eq: F PL ineq}
     \hspace{-10pt}\frac{1}{2}\left\|\sum_{i = 1}^N\nabla f(x)\right\|^2 \geq \mu(F(x)-F^*) = \mu\sum_{i = 1}^N(f_i(x)-f_i^*),
\end{align}
where $F^* = F(x^*)$ and $f_i^* = f_i(x^*)$.
\vspace{-.5em}
\end{Assumption}

\begin{Remark}\label{PL remark}
Strong convexity of the objective function is a  standard assumption used in literature to show exponential convergence. As noted in \cite{karimi2016linear}, PL inequality is the weakest condition among other similar conditions popularly used in the literature to show linear convergence in discrete-time (exponential, in continuous-time). Notably, a strongly convex function $F$ satisfies PL inequality. Furthermore, note that under Assumption~\ref{f assum 2}, it is not required that the function $F$ is convex, as long as its minimizer exists and is unique.
\end{Remark}

It is easy to show that if a function $F:\mathbb R^m\rightarrow \mathbb R$ is strongly convex, then the function $G:\mathbb R^n\rightarrow \mathbb R$, defined as $G(x) = F(Ax)$, where $A\in \mathbb R^{n\times m}$ is not full row-rank, may not be strongly convex. On the other hand, as shown in \cite[Appendix 2.3]{karimi2016linear}, $G$ still satisfies PL inequality for any matrix $A$. Below, an example of an important class of problems is given for which the objective function satisfies PL inequality.

\begin{Example}\label{Example: PL inequality}
\textbf{Least squares}:  Consider the optimization problem 
\begin{equation}\label{PL example 1}
    \min_x \|Ax-b\|^2 = \sum_{i = 1}^n \|A_ix-b_i\|^2,
\end{equation}
where $x\in \mathbb R^n, A\in \mathbb R^{n\times n}$ and $b\in \mathbb R^n$. Here, the function $F(x) = \|x-b\|^2$ is strongly-convex, and hence, $G(x) = \|Ax-b\|^2$ satisfies PL inequality for any matrix $A$. 
\end{Example}

The objective function of \eqref{PL example 1} satisfies PL inequality, but need not be strongly convex for any matrix $A$, thus, one can use \eqref{eq:cent_dynamics} to find the optimal solution for \eqref{PL example 1} in a fixed time. This is an important class of functions in machine learning problems. 
 
\begin{Lemma}\label{corollary central PL}
Let Assumption \ref{f assum 2} hold. Suppose the dynamics of each agent $i\in\mathcal{V}$ in the network is given by \eqref{eq:cent_dynamics} where $g^*$ given as \eqref{eq:g_star new} with $x_i(t) = x(t)$ for each $i\in\mathcal{V}$, for all $t\geq 0$. Then the trajectories of all agents converge to the optimal point $x^*$, i.e., the minimizer of the team objective function \eqref{opt prob} in a fixed time $T_{PL}>0$.
\end{Lemma}
\begin{proof}
Consider the candidate Lyapunov function as $V(x) = \sum_{i = 1}^N(f_i(x)-f_i(x^*)) = (F(x)-F^*)$. Note that $V$ is positive definite and per Assumption \ref{f assum 2}, radially unbounded. Taking its time derivative along the trajectories of \eqref{eq:cent_dynamics}, we obtain
\begin{align*}
    \dot V(x) & = -\sum_{i = 1}^N\nabla f_i^{\intercal}\left(\sum_{i = 1}^N \nabla f_i + \sign^{l_1}\left(\sum_{i = 1}^N\nabla f_i\right)\right. \\
    &\quad +\left.\sign^{l_2}\left(\sum_{i = 1}^N\nabla f_i\right)\right)\\
    & = -\|\nabla F(x)\|^2-\|\nabla F(x)\|^{l_1+1}-\|\nabla F(x)\|^{l_2+1}\\
    & \overset{\eqref{eq: F PL ineq}}{\leq} -2\mu (F(x)-F^*)-(2\mu)^{\frac{1+l_1}{2}}(F(x)-F^*)^{\frac{1+l_1}{2}}\\
    &\qquad -(2\mu)^{\frac{1+l_2}{2}}(F(x)-F^*)^{\frac{1+l_2}{2}}\\
    & \leq -4\mu V(x)-(4\mu)^{\frac{1+l_1}{2}}V(x)^\frac{1+l_1}{2}-(4\mu)^{\frac{1+l_2}{2}}V(x)^\frac{1+l_2}{2} \\
    & \leq -k_1 V(x)^\frac{1+l_1}{2}-k_2 V(x)^\frac{1+l_2}{2}.
\end{align*}
Thus, using Lemma~\ref{FxTS TH}, we obtain that there exits $T_{PL}<\infty$ such that for all $t\geq T_{PL}$, we have that $V(x(t)) = 0$, or equivalently, $F(x(t)) = F^*$. Under Assumption \ref{f assum 2}, we have that $F$ has a unique minimizer, and thus, $F(x(t)) = F^*$ implies that $x(t) = x^*$, which completes the proof.
\end{proof}

\begin{Remark}\label{remark cent dist}
    Lemmas~\ref{thm:central new} and \ref{corollary central PL} represent centralized protocols for convex optimization of team objective functions. Here, the agents are already in consensus and have access to the global information $\sum_{i = 1}^N \nabla f_i(x)$. In the distributed setting, agents can only access their local information, as well as $x_j$, $\nabla f_j(x_j)$ for all $j\in\mathcal{N}_i$, and will not be in consensus in the beginning. We now propose distributed scheme for estimation of global quantity that achieves consensus in a fixed time. The main distributed algorithm is presented in Algorithm \ref{alg:opt} at the end of this section. 
\end{Remark}

\subsection{Distributed estimation of global parameters}
We now present results for distributed estimation of global quantity that achieves consensus in a fixed time so that the problem can be solved in a distributed setting. 
For each agent $i\in\mathcal{V}$, define $g_i$ as:
\begin{align}\label{eq:gi_func new}
\hspace{-10pt}g_i & = -\left(N\theta_{i} + \sign^{l_1}(N\theta_{i})+ \sign^{l_2}(N\theta_{i})\right),
\end{align}
where $g_i$ denotes agent $i$'s estimate of $g^*$ and $\theta_i:\mathbb R_+\rightarrow\mathbb R^{d}$ is the estimate of the global (centralized) quantities, whose dynamics is defined as
\begin{align}\label{eq:theta new}
	\dot {\theta}_i = \omega_i + h_i,
\end{align}
where $h_i$ is defined as $h_i = \frac{d}{dt}\nabla f_i(x_i)$. The signal $\omega:\mathbb R_+\rightarrow\mathbb R^{d}$, defined as
\begin{align}\label{eq:omega_dot new}
	\omega_i &= p\sum_{j\in \mathcal N_i}\Big(\sign(\theta_j-\theta_i) + \gamma \sign^{\nu_1}(\theta_j-\theta_i) + \delta \sign(\theta_j-\theta_i)^{\nu_2}\Big),
\end{align}
where $p, \gamma, \delta>0$, and $0<\nu_2<1<\nu_1$, are suitably chosen in order to achieve consensus over the quantities $\theta_i$, as shown later. The functions $\{h_i\}$ are needed to drive this average consensus values to the global quantities to be estimated. Observe that $\{\theta_i\}$ are updated in~\eqref{eq:theta new} in a \emph{distributed} manner. We make the following assumption on functions $h_i$.

\begin{Assumption}\label{ass:4}
	The functions $h_i, h_j$ satisfy $\|h_i(t)-h_j(t)\|\leq \rho$ for all $t\geq 0$, $i,j \in \mathcal{V}, i\neq j$, for some $\rho>0$.
	\vspace{-.5em}
\end{Assumption}
This assumption can be easily satisfied if the graph is connected for all time $t$  and the gradients and their derivatives are bounded \cite{hu2018distributed, rahili2016distributed}. Many common objective functions, such as quadratic cost functions satisfy this assumption. Under this assumption, we can state the following results. 

\begin{Lemma}\label{lemma:app}
Let Assumption \ref{ass:4} hold, and the gain $p$ in \eqref{eq:omega_dot new} satisfies $p>\left(\frac{N-1}{2}\right)\rho$; then for each agent $i\in\mathcal{V}$, $\theta_i(t)=\theta_c(t) \coloneqq \frac{1}{N}\sum_{j=1}^N\theta_j(t) = \frac{1}{N}\sum_{i = 1}^N\nabla f_i(x_i(t))$, for all $t\geq T_p$ where 
$$T_\text{p} \coloneqq \frac{2}{p\gamma N^{2(1-\kappa_1)}c^{\kappa_1}(\kappa_1-1)}+\frac{2}{p\delta c^{\kappa_2}(1-\kappa_2)},$$ $\kappa_1 = \frac{1+\nu_1}{2}, \kappa_2= \frac{1+\nu_2}{2}$ and $c = 4\lambda_2(L_A)$.
\end{Lemma}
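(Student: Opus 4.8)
The plan is to study the disagreement of each $\theta_i$ about the instantaneous mean and to show it is driven to zero in a fixed time. Set $\theta_c=\frac1N\sum_{j}\theta_j$ and $e_i=\theta_i-\theta_c$, so that $\sum_i e_i=0$ and $e_{ij}=\theta_i-\theta_j$. The first observation is that the coupling input averages out: since the graph is undirected ($a_{ij}=a_{ji}$) and each of $\sign,\ \sign^{\nu_1},\ \sign^{\nu_2}$ is an odd map, Lemma~\ref{lemma:sign} (applied to each summand) yields $\sum_i\omega_i=0$. Hence $\dot\theta_c=\frac1N\sum_i h_i=\frac{d}{dt}\big(\frac1N\sum_i\zeta_i\big)$, and integrating from the initialization $\theta_i(0)=\zeta_i(0)$ gives $\theta_c(t)=\frac1N\sum_{i}\zeta_i(x_i(t))$ for all $t$; this is exactly the second assertion of the lemma.

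Next I would write the error dynamics $\dot e_i=\omega_i+\tilde h_i$ with $\tilde h_i=h_i-\frac1N\sum_j h_j$, and record the mismatch bound $\|\tilde h_i\|=\|\frac1N\sum_j(h_i-h_j)\|\le\frac{N-1}{N}\rho$, which is where Assumption~\ref{ass:4} enters. For the radially unbounded candidate $V=\frac12\sum_i\|e_i\|^2$ one has $\dot V=\sum_i e_i^T\omega_i+\sum_i e_i^T\tilde h_i$. Because $\sign,\sign^{\nu_1},\sign^{\nu_2}$ are odd, Lemma~\ref{lemma eij half} collapses every coupling sum onto the edge differences $e_{ij}$, and using $e_{ij}^T\sign^{\mu}(e_{ij})=\|e_{ij}\|^{\mu+1}$ together with $e_{ij}^T\sign(e_{ij})=\|e_{ij}\|_1$ (the $1$-norm) gives $\sum_i e_i^T\omega_i=-\frac p2\sum_{i,j}a_{ij}\|e_{ij}\|_1-\frac{p\gamma}2\sum_{i,j}a_{ij}\|e_{ij}\|^{\nu_1+1}-\frac{p\delta}2\sum_{i,j}a_{ij}\|e_{ij}\|^{\nu_2+1}$.

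The two homogeneous terms produce the fixed-time rates. With $\kappa_1=\frac{1+\nu_1}2>1$ and $\kappa_2=\frac{1+\nu_2}2<1$, I would apply Lemma~\ref{lemma:ineq} to the nonnegative numbers $a_{ij}\|e_{ij}\|^2$: its second inequality gives $\sum_{i,j}a_{ij}\|e_{ij}\|^{2\kappa_1}\ge N^{2(1-\kappa_1)}(\sum_{i,j}a_{ij}\|e_{ij}\|^2)^{\kappa_1}$ and its first gives $\sum_{i,j}a_{ij}\|e_{ij}\|^{2\kappa_2}\ge(\sum_{i,j}a_{ij}\|e_{ij}\|^2)^{\kappa_2}$. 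Applying Lemma~\ref{lemma:Laplacian} componentwise together with $\sum_i e_i=0$ yields $\sum_{i,j}a_{ij}\|e_{ij}\|^2=2e^TL_Ae\ge 4\lambda_2(L_A)V=cV$ with $c=4\lambda_2(L_A)$. Substituting produces the terms $-\frac{p\gamma}2 N^{2(1-\kappa_1)}c^{\kappa_1}V^{\kappa_1}$ and $-\frac{p\delta}2 c^{\kappa_2}V^{\kappa_2}$, whose coefficients are exactly those appearing in the stated $T_1$ through Lemma~\ref{FxTS TH}.

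The remaining step, which I expect to be the main obstacle, is to show that the discontinuous feedback absorbs the disturbance mismatch, i.e. that $-\frac p2\sum_{i,j}a_{ij}\|e_{ij}\|_1+\sum_i e_i^T\tilde h_i\le 0$ whenever $p>\frac{N-1}2\rho$, so that this pair of contributions may be dropped from $\dot V$. The natural estimates are $\sum_i e_i^T\tilde h_i\le\sum_i\|e_i\|\,\|\tilde h_i\|\le\frac{N-1}{N}\rho\sum_i\|e_i\|$ and $\|e_{ij}\|_1\ge\|e_{ij}\|$, after which one must lower-bound the $\sign$ dissipation $\sum_{i,j}a_{ij}\|e_{ij}\|$ through the connectivity guaranteed by Lemma~\ref{lemma:Laplacian}. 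This non-smooth domination is delicate precisely because the coupling acts only across edges while the disturbance acts at every node, so the graph structure and the gain condition must be used carefully and the trajectories read in the Filippov sense. Once this inequality is established, $\dot V\le-\frac{p\delta}2 c^{\kappa_2}V^{\kappa_2}-\frac{p\gamma}2 N^{2(1-\kappa_1)}c^{\kappa_1}V^{\kappa_1}$, and Lemma~\ref{FxTS TH} with $\kappa_2<1<\kappa_1$ forces $V(t)=0$, i.e. $\theta_i(t)=\theta_c(t)$ for every $i$, for all $t\ge T_1$ with the stated bound.
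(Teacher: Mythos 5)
Your proposal follows the paper's proof essentially step for step: the same error variables $e_i=\theta_i-\theta_c$, the same Lyapunov function $V=\frac12\sum_i\|e_i\|^2$, the same use of Lemma~\ref{lemma:sign} and Lemma~\ref{lemma eij half} to kill the mean of the coupling and collapse it onto edge differences, the same application of Lemma~\ref{lemma:ineq} and Lemma~\ref{lemma:Laplacian} to the quantities $a_{ij}\|e_{ij}\|^2$ to produce the $V^{\kappa_1}$ and $V^{\kappa_2}$ rates, and the same conservation argument ($\sum_i\omega_i=0$ plus the initialization $\theta_i(0)=\zeta_i(0)$) for the identity $\theta_c(t)=\frac1N\sum_i\zeta_i(x_i(t))$. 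The one substantive issue is that you stop short of the inequality you yourself flag as the crux: that the $\sign(\cdot)$ term with gain $p>\frac{N-1}{2}\rho$ dominates the disturbance mismatch. As written the argument is incomplete --- ``once this inequality is established'' is exactly the step that must be supplied, and without it the gain condition $p>\frac{N-1}{2}\rho$ is never actually used.

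Here is how the paper closes it. First symmetrize the disturbance term just as you symmetrized the coupling term, giving $\sum_i e_i^T\tilde h_i=\frac{1}{2N}\sum_{i,j=1}^N e_{ij}^T(h_i-h_j)\le\frac{\rho}{2N}\sum_{i,j=1}^N\|e_{ij}\|$, where the sum now runs over \emph{all} ordered pairs, not just edges. Then bound the all-pairs sum by the edge sum: $\sum_{i,j=1}^N\|e_{ij}\|\le N\max_i\sum_{j\ne i}\|e_{ij}\|$, and for the maximizing node $i^*$ connectivity supplies, for each $j$, a path from $i^*$ to $j$ with at most $N-1$ edges; the triangle inequality bounds $\|e_{i^*j}\|$ by the sum of $\|e_{kl}\|$ over the edges of that path, and since each edge is used at most $N-1$ times across all $j$ one gets $\max_i\sum_{j\ne i}\|e_{ij}\|\le\frac{N-1}{2}\sum_{i=1}^N\sum_{j\in\mathcal N_i}\|e_{ij}\|$ (the $\frac12$ accounts for each edge being counted twice in the double sum). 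Altogether $\sum_i e_i^T\tilde h_i\le\frac{\rho}{2}\frac{N-1}{2}\sum_i\sum_{j\in\mathcal N_i}\|e_{ij}\|$, which is absorbed by $-\frac p2\sum_i\sum_{j\in\mathcal N_i}\|e_{ij}\|$ precisely when $p>\frac{N-1}{2}\rho$. Your alternative route through $\sum_i\|e_i\|$ also works, since $\sum_je_j=0$ gives $\|e_i\|\le\frac1N\sum_j\|e_{ij}\|$, but it lands on the same path-counting estimate. One last notational point: with the paper's definition $\sign^\mu(x)=x\|x\|^{\mu-1}$, the undecorated $\sign$ term contributes $e_{ij}^T\sign(e_{ij})=\|e_{ij}\|$ (Euclidean norm), not $\|e_{ij}\|_1$; this does not harm your argument since you only use $\|e_{ij}\|_1\ge\|e_{ij}\|$, but the constants above should be read in the 2-norm.
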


\noindent The proof is provided in Appendix \ref{app: lemma app proof}. We now present the following result on distributed parameter estimation in a fixed time.

\begin{Lemma}[\textbf{Fixed-time parameter estimation}]\label{thm:param_estimation new}
	Let $\omega_i(0) = \mathbf 0_{d}$ for each $i\in \mathcal V$ and the gain $p$ in \eqref{eq:omega_dot new} satisfy $p>\left(\frac{N-1}{2}\right)\rho$. Then there exists a fixed-time $0<T_\text{p}<\infty$ such that $g_i(t) = g_j(t)$ for all $i,j\in\mathcal{V}$ and $t\geq T_\text{p}$.
\end{Lemma}

\begin{proof}
	The proof follows directly from Lemma~\ref{lemma:app}, i.e., it holds that $\theta_i(t) = \theta_j(t)$ for all $t\geq T_\text{p}$, $i, j\in \mathcal V$. From the definition of $g_i$ in \eqref{eq:gi_func new}, it follows that $g_i(t)=g_j(t)$ for all $t\geq T_\text{p}$ and for each $i,j\in\mathcal{V}$.
\end{proof}

\noindent The centralized fixed-time protocol in Lemma~\ref{thm:central new} is based on two key assumptions: (a) Agents are being driven by the same input $g^*$, and (b) agents start at the same initial state,, i.e., $x_i(0)=x_j(0)$ for all $i,j\in\mathcal{V}$. To this end, Lemma~\ref{thm:param_estimation new} only ensures that the first of the two conditions is met. All agents must be driven to the same state in order to ensure the applicability of Lemma~\ref{thm:central new} in the distributed setting. Consequently, we propose the following update rule for each agent $i\in\mathcal{V}$ in the network:
\begin{align}\label{eq:u_i new}
	u_i = \tilde u_i + g_i,
\end{align}
where $g_i$ is as described in \eqref{eq:gi_func new}, and $\tilde{u}_i$ is defined as locally averaged signed differences:
\begin{align}\label{eq:u_tilde}
	\tilde u_i &= q\sum_{j\in \mathcal N_i}\Big(\sign(x_j-x_i) + \alpha \sign^{\mu_1}(x_j-x_i)  +\beta \sign^{\mu_2}(x_j-x_i)\Big),
\end{align}
where $q, \alpha, \beta >0$, $\mu_1>1$ and $0<\mu_2<1$. The following results establish that the state update rule for each agent proposed in \eqref{eq:u_i new} ensures that the agents reach global consensus and optimality in fixed time.

\begin{Lemma}[\textbf{Fixed-time consensus}]\label{thm:consensus new}
	Under the effect of control law $u_i$ in \eqref{eq:u_i new} with $\tilde{u}_i$ defined as in \eqref{eq:u_tilde}, and $g_i(t)=g_j(t)$ for all $t\geq T_\text{p}$ and $i,j\in\mathcal V$, the closed-loop trajectories of \eqref{sys flow} converge to a common point $x$ for all $i\in \mathcal V$ in a fixed time $T_\text{con}$, i.e., $x_i(t)=\bar{x}(t)$ for all $t\geq T_\text{p}+T_\text{con}$.
\end{Lemma}

\begin{proof}
	The proof follows from Lemma~\ref{lemma:app} and the fact that $g_i(t)=g_j(t)$ for all $t\geq T_\text{p}$, $i,j\in\mathcal{V}$. Thus, for $t\geq T_\text{p}$, the dynamics of agent $i$ in the network is described by $\dot{x}_i(t) = \tilde{u}_i(t) + g_i(t)$ with $\|g_i(t)-g_j(t)\|=0$ for all $i,j\in\mathcal{V}$. Moreover, $\tilde{u}_i$ has a form similar to $\omega_i$. Thus, from Lemma~\ref{lemma:app}, it follows that there exists a $T_\text{con}>0$ such that $x_i(t)=\dfrac{1}{N}\sum_{j=1}^Nx_j(t)$ for $t\geq T_\text{p}+T_\text{con}$, where $T_\text{con}$ satisfies $
	T_\text{con} \leq \frac{2}{q\alpha N^{2(1-\tau_1)}\tilde{c}^{\tau_1}(\tau_1-1)}+\frac{2}{q\beta \tilde{c}^{\tau_2}(1-\tau_2)}$, where $\tau_1\coloneqq \frac{1+\mu_1}{2}$, $\tau_2\coloneqq \frac{1+\mu_2}{2}$, and $\tilde{c}>0$ is an appropriate constant.
\end{proof}

Finally, the following result establishes that the agents track the optimal point in a fixed time.

\begin{Theorem}[\textbf{Fixed-time distributed optimization}]\label{cor:distributed_opt new}
Let each agent $i\in\mathcal{V}$ in the network be driven by the control input $u_i$ \eqref{eq:u_i new}. If the functions satisfy Assumption \ref{ass:2 new} (respectively, Assumption \ref{f assum 2}), then the agents track the minimizer of the team objective function within fixed time \textnormal{$T = T_\text{p}+T_\text{con}+T_{sc}$} (respectively, \textnormal{$T = T_\text{p}+T_\text{con}+T_{PL}$}). 
\end{Theorem}

\begin{proof}
The proof follows directly from the previous results presented in this section. From Lemmas~\ref{lemma:app} and \ref{thm:consensus new}, it follows that $g_i(t) = g_j(t)$ for all $t\geq T_\text{p}$, and $x_i(t) = x_j(t)$ for all $t\geq T_\text{p}+T_\text{con}$. Since $g_i$ is a function of $\theta_i$, and from Lemma~\ref{thm:param_estimation new}, we have that $\theta_i(t) = \sum_j\nabla f_j(x_j(t))$ for all $t\geq T_\text{p}$, with $\nabla f_i(x_i(t)) = \nabla f_j(x_j(t))$ for all $t\geq T_\text{p}+T_\text{con}$, we obtain that $g_i(t)=g^*(t)$ and $\tilde{u}_i(t)=0$ for all $i\in\mathcal{V}$, $t\geq T_\text{p}+T_\text{con}$.
Thus, if the objective functions satisfy Assumption \ref{ass:2 new} (respectively, Assumption \ref{f assum 2}), the conditions of the centralized fixed-time protocol in Lemma~\ref{thm:central new} are satisfied, and therefore, $x_i(t)=x^*$ for all $i\in\mathcal{V}$, for $t\geq T_\text{p}+T_\text{con}+T_{sc}$ (respectively, $t\geq T_\text{p}+T_\text{con}+T_{PL}$).
\end{proof}
Note that the total time of convergence $T = T_\text{p}+T_\text{con}+T_{sc}$ ((respectively, $t\geq T_\text{p}+T_\text{con}+T_{PL}$)) depends upon the design parameters and is inversely proportional to $p, q,\alpha, \beta, \gamma, \delta, \mu_1, \mu_2, \nu_1, \nu_2, l_1, l_2$. Hence, for a given user-defined time budget $T_b$, one can choose large values of these parameters so that $T\leq T_b$, and hence, convergence can be achieved within user-defined time $T_b$. Note that the time of convergence decreases as $\mu_1, l_1, \nu_1$ increase and $\mu_2, l_2, \nu_2$ decrease. The overall Fixed-time stable Distributed Optimization Algorithm (FxTS-DOA) with discrete-time iterative implementation is described in Algorithm~\ref{alg:opt}.

\begin{algorithm}
	\caption{Discretized FxTS-DOA}\label{alg:opt}
	\begin{algorithmic}[1]
		\Procedure{FxTS Dist Opt}{$(A,\mathcal{V})$, $\{f_i(\cdot)\}$}
		\State {\bf Inputs}: Parameters $p,q,l_1,l_2,\nu_1,\nu_2,\mu_1,\mu_2$;\quad  Step-size $\eta$
		\State Initialize local estimates $\{x_i\}$ for each $i\in\mathcal{V}$
		\State $\omega_i \gets 0_{d\times 1}$ for each $i\in\mathcal{V}$
		\State $\theta_i \gets 0_{d\times 1}$ for each $i\in\mathcal{V}$
		\For{\texttt{$k = 1$, $k\leq$max-epochs}}
            \State Each agent computes its own gradient $\nabla_if_i(x_i)$\vspace{1em}
            \State $\bar{u}_i\gets q\left((x_j\!-\!x_i)+\sign^{\mu_1}(x_j\!-\!x_i)+\sign^{\mu_2}(x_j\!-\!x_i)\right)$
            \State $\omega_i\gets \eta\left((\theta_j\!-\!\theta_i)+\sign^{\nu_1}(\theta_j\!-\!\theta_i)+\sign^{\nu_2}(\theta_j\!-\!\theta_i)\right)$
            \State \Comment{Information sharing with neighbors}\vspace{1em}
            \State $\theta_i\gets\omega_i + \eta\nabla_if_i(x_i)$
            \State $g_i\gets-\left((N\theta_i)+\sign^{l_1}(N\theta_i)+\sign^{l_2}(N\theta_i)\right)$
            \State $x_i\gets x_i+\eta (g_i+\bar{u}_i)$
            \State \Comment{Agents update their estimates locally}
        \EndFor
		\State \textbf{return} $x_1 = x_2 = \dots = x^*$
		\EndProcedure
	\end{algorithmic}
\end{algorithm}


\begin{Remark}
There may exist communication link failures or additions among generator buses, which results in a time-varying communication topology. We model the underlying graph $\mathcal G(t) = (A(t),\mathcal{V})$ through a time-varying signal $\rchi(t):\mathbb{R}_+\to\Psi$ as $\mathcal G(t)=\mathcal G_{\rchi(t)}\coloneqq(A_{\rchi(t)},\mathcal{V})$, where $\Psi=\{1,2,\dots,R\}$ is a finite set consisting of index numbers associated to specific adjacency matrices $A(t) = [a_{ij}(t)]\in \{A_1,\dots, A_R\}$. It can be easily shown that the proposed results extend to the case of time-varying topology under the condition that the graph $\mathcal G(t)$ is connected at all times. 
\end{Remark}

\section{Discretization of the FxTS-DOA}\label{sec: discretization}
Continuous-time dynamical systems, such as the one given by \eqref{sys flow} with $u_i$ given by \eqref{eq:u_i new}, offer effective insights into designing accelerated schemes for solving a distributed optimization problem. However, in practice, a discrete-time implementation is used for solving optimization problems. In general, the fixed-time convergent behavior of the FxTS-DOA does not need to be preserved upon discretization. A consistent discretization scheme preserves the convergence behavior of the continuous-time dynamical system in the discrete-time setting as well (see, e.g., \cite{polyakov2019consistent}). In particular, \cite{polyakov2019consistent} characterizes a discretization to be consistent with a fixed-time convergent dynamical system if the trajectories of the discretized system converge to an arbitrarily small neighborhood of the equilibrium point of the continuous-time system within a fixed number of steps, independent of the initial conditions. The analysis below shows that when the fixed-time convergent closed-loop dynamics \eqref{sys flow} under $u_i$ given by \eqref{eq:u_i new} is discretized using Euler discretization, it leads to consistent discretization.

In order to prove that an Euler discretization scheme of the proposed method in Section~\ref{sec:strong_convex} leads to a consistent discretization, it is sufficient to show that the closed-loop dynamics \eqref{sys flow} under $u_i$ given by \eqref{eq:u_i new} satisfies the conditions of \cite[Theorem 3]{garg2021MVIP}. Consider the proposed algorithm in Section~\ref{sec:strong_convex}. For $0\leq t\leq T_\text{p}+T_\text{con}$, the dynamics for all $\theta_i, x_i$ can be written in a compact form as:
\begin{align}
    \dot{\theta} & = F_1(\theta) + F_2(x),\quad
    \dot x = F_3(x) + F_4(\theta),
\end{align}
where {\small
\begin{align*}
    \hspace{-12pt}F_1(\theta)\!=\!\begin{bmatrix}\omega_1\\ \vdots \\ \omega_N \end{bmatrix}\!, F_2(x) \!=\! \begin{bmatrix}h_1\\ \vdots \\ h_N \end{bmatrix}\!,
    F_3(x)\!=\! \begin{bmatrix}\tilde u_1\\ \vdots \\ \tilde u_N \end{bmatrix}\!, F_4(\theta)\!=\! \begin{bmatrix}g_1\\ \vdots \\ g_N\end{bmatrix}\!.
\end{align*}}\normalsize
More compactly, define $z = [\theta^{\intercal},\; x^{\intercal}]^{\intercal}\in \mathbb R^{2Nd}$ and $\mathcal F(z) \coloneqq [(F_1(\theta) + F_2(x))^{\intercal}, \; F_3(x)^{\intercal}+F_4(\theta)^{\intercal}]^{\intercal}$ so that 
\begin{align}\label{eq: dot z}
    \dot z \in \mathcal F(z).
\end{align}
{We use the notion of differential inclusion in \eqref{eq: dot z} since the right-hand side of \eqref{eq: dot z} is not single-valued.} The interested reader is referred to \cite{clarke2008nonsmooth} for more details. First, we show that the set-valued map $\mathcal F$ in \eqref{eq: dot z} satisfies the conditions in \cite[Theorem 3]{garg2021MVIP}.

\begin{figure*}[!ht]
	\begin{center}
		\begin{tabular}{cc}
			\includegraphics[width=0.685\columnwidth]{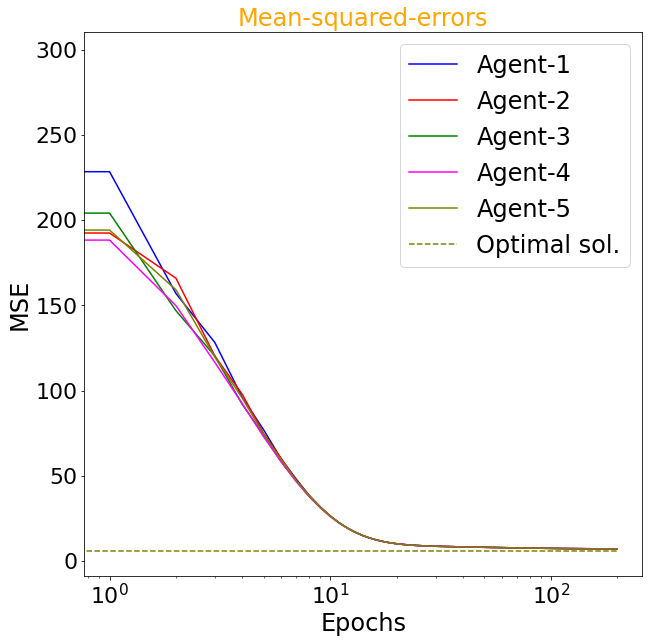}&
			\includegraphics[width=0.70\columnwidth]{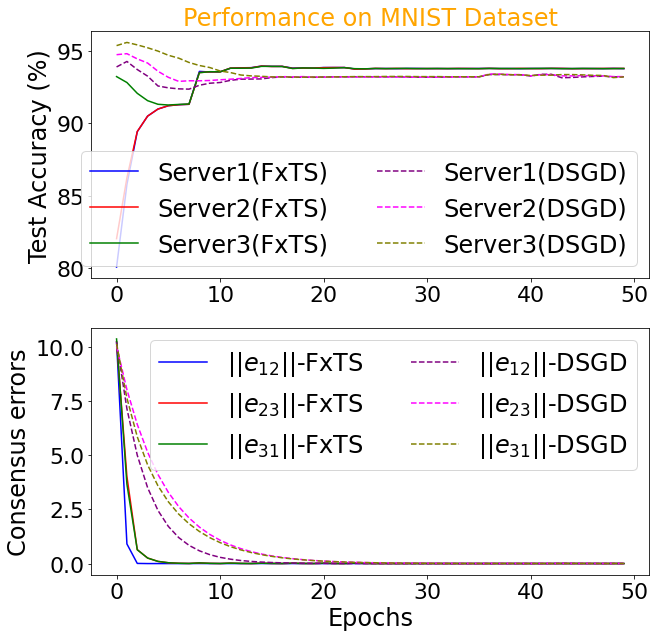}\cr
			(a)&(b)
		\end{tabular}
		\caption{Evaluation of the proposed DOA for (a) linear regression, (b) training of deep neural networks}
		\label{fig:Exp}
	\end{center}
	\vspace{-1em}
\end{figure*}

\begin{Lemma}\label{lemma: F assum satisfy}
If the functions $f_i$ satisfy either Assumption \ref{ass:2 new} (or Assumption \ref{f assum 2}) for all $i\in \mathcal{V}$, then $\mathcal F$ in \eqref{eq: dot z} is upper semi-continuous set-valued map, taking non-empty, convex and compact values. 
\end{Lemma}
\begin{proof}
Define $S = \{z\; |\; \mathcal F(z) = 0\}$ is the set of equilibrium points for the dynamics of variable $z$. Note that the equilibrium points of \eqref{eq: dot z} are the points $x_i = x_j$ and $\theta_i = \theta_j$ for all $i\neq j$, which is a $2d-$dimensional subspace in $\mathbb R^{2Nd}$, and thus, $S$ is a Lebesgue measure zero set in $\mathbb R^{2Nd}$. Note that the map $\mathcal F$ is continuous for all $z\in \mathbb R^{2Nd}\setminus \bigcup\limits_{i\neq j}S_{ij}$ where 
\begin{align}
    \hspace{-10pt}S_{ij} = \{z = [\theta^{\intercal}, \; x^{\intercal}]^{\intercal}\; |\; x_i = x_j, \theta_i = \theta_j\} \subset \mathbb R^{2Nd-2d},
\end{align}
and is also locally essentially bounded. From \cite[Remark 2]{danca2010chaotic}, we obtain that the map $\mathcal F$ is upper semi-continuous with non-empty, compact and convex values for all $0\leq t\leq T_\text{p}+T_\text{con}$. 

Now, it holds that $\omega_i(t) = 0$ and $\tilde u_i(t) = 0$ for $i = \{1, 2, \dots, N\}$, i.e., $F_1(\theta(t)) = F_3(x(t)) = 0$, for all $t\geq T_\text{p}+T_\text{con}$. Furthermore, for $t\geq T_\text{p}+T_\text{con}$, it holds that $g_i(\theta (t)) = g^*(x(t))$, and thus, $F_4(\theta(t)) = F_4(x(t))$. Hence, the augmented dynamics for $t\geq T_\text{p}+T_\text{con}$ reads:
\begin{align}
    \dot{\theta}(t) & = F_2(x(t)), \quad
    \dot x(t) = F_4(x(t)).
\end{align}
Note that $F_2$ and $F_4$ are continuous functions in their arguments, and thus, the map $\mathcal F(z)$ the required conditions for all $t\geq T_\text{p}+T_\text{con}$, which completes the proof. 
\end{proof}

Now, we are ready to present the main result of this section, which shows that when the closed-loop dynamics of \eqref{sys flow} under $u = \tilde u_i + g_i$, written compactly as \eqref{eq: dot z}, is discretized using Euler discretization, the trajectories of the resulting discrete-time system reach an arbitrarily small neighborhood of the optimal point $x^*$ within a fixed number of steps. To this end, define $z^*$ as $z^* \coloneqq \begin{bmatrix}I_N\otimes x^* \\ 0_{Nd} \end{bmatrix}$
where $\otimes$ denote the Kronecker product, $I_N\in \mathbb R^{N\times N}$ an identity matrix, and $0_{Nd}\in \mathbb R^{Nd}$ a vector consisting of zeros.

\begin{Theorem}\label{thm: discretized}
Assume that the functions $f_i$ satisfy Assumption~\ref{ass:2 new} (or Assumption~\ref{f assum 2}) for all $i\in \mathcal{V}$ and let $p = q$, $\alpha = \gamma$, $\beta  = \delta$, $l_1 = \mu_1 = \nu_1 = 1+ \frac{1}{\mu}$ and $l_2 = \mu_2 = \nu_2 = 1-\frac{1}{\mu}$ for some $\mu>1$. Consider the Euler discretization of \eqref{eq: dot z} given by
\begin{align}\label{eq:dot z disc}
    z_{k+1} \in z_k +\eta \mathcal F(z_k),
\end{align}
where $\eta>0$. Then, for each $\epsilon>0$, there exists $\eta^*>0$ such that for all $\eta\in (0, \eta^*]$, the trajectories of \eqref{eq:dot z disc} satisfy {\small
\begin{align}\label{eq:disc z star}
    \|z_k-z^*\|\leq \begin{cases}
   \frac{1}{\sqrt{c_1}}\left(\sqrt{\frac{a}{b}}\tan\left(\frac{\pi}{2}-\frac{\eta k\sqrt{ab}}{2\mu}\right)\right)^{\bar\mu} +\epsilon\; ; & k\leq \frac{\mu\pi}{\sqrt{ab}\eta}\\
    \epsilon \; ; & \textrm{otherwise},
    \end{cases},
\end{align}}\normalsize
where $a, b, c_1, \mu>0$
\end{Theorem}
\noindent The proof is provided in Appendix \ref{app: proof thm discretized}. 
Thus, it is shown that the trajectories of the closed-loop dynamics \eqref{sys flow} of each node $i$ under the input \eqref{eq:u_i new}, when discretized using Euler discretization, converge to an arbitrarily small neighborhood (dictated by $\epsilon$) of the optimal point $x^*$ within a fixed number of steps $\frac{\bar \mu}{2\sqrt{ab}\eta}$, independent of the initial conditions $x_i(0)$.

\section{Numerical Validation}\label{sec:simulation}
We now validate the proposed fixed-time convergent distributed optimization algorithm on two large-scale learning tasks. The algorithm was implemented using PyTorch 0.4.1 on a 16GB Core-i7 2.8GHz CPU and NVIDIA GeForce GTX-1060 GPU\footnote{source code will be made available in our public submission}.

The first task concerns distributed linear regression among a network of $N=5$ agents. The goal is to find the linear relationship between an input $z\in\mathbb{R}^3$ and an output $y\in\mathbb{R}^2$ as $y = Wz + b$. A dataset comprising $n=15,000$ points is randomly distributed across the five agents. Each agent can access its dataset and only exchange information with its immediate neighbor. Additionally, each agent has its own estimate of the parameter vectors $W\in\mathbb{R}^{2\times 3}$, and $b\in\mathbb{R}^{2}$, denoted by $x_i\coloneqq[W_i^\intercal, b_i^\intercal]^\intercal$. Here, $W_i$ and $b_i$ are vectorized representations of agent $i$'s estimate of the parameters $W$ and $b$. The agents find the regressor by minimizing the mean-squared-error on their respective loss functions, i.e.,
\begin{align*}
    f_i(x_i) = \frac{1}{n_i}\sum_{j=1}^{n_i}\|y_j-(W_iz_j+b_i)\|^2,
\end{align*}
where $n_i$ is the size of $i^\text{th}$-agent's dataset. Fig.~\ref{fig:Exp}a shows the performance of our algorithm with each epoch. Despite working with different data points and having different initial parameter estimates, the agents converge to the optimal solution in a very few epochs while reaching consensus on their estimates of the regression parameters.

We further validate the performance of the proposed DOA for distributed training of deep neural networks on the MNIST dataset~\cite{lecun1998gradient}. This is in contrast to All-Reduce algorithm~\cite{cho2019blueconnect}, where different servers carry the same parameter vector while exchanging gradient information with their neighbors in a ring topology. Instead, we assume a network of three servers connected in a line graph where each server has access to only one-third (20k) of the total (60k) training images. We consider a network with a single convolutional layer with ReLU activation (consisting of 32 filters of size $3\times 3$), followed by a dense layer (with ReLU activation) of output size 128. The final linear layer transforms 128-dimensional input to a 10-dimensional output (corresponding to 10 classes) with SoftMax activation. The network comprises a total of 2.8$\times 10^6$ learnable parameters. The individual servers have their own local estimates of the neural network parameters. Figure~\ref{fig:Exp}b shows that the servers initialized with different parameters and having different test accuracies quickly converge to around 94\% accuracy in less than ten epochs. Moreover, the norms of the consensus errors between servers $i$ and $j$, denoted by $e_{ij}\coloneqq x_i-x_j$, too, converge to zero, indicating that all the servers arrive at a similar estimate for all the neural network parameters. We also compare the performance of the proposed FxTS-DOA with the decentralized SGD (DSGD)~\cite{koloskova2020unified} algorithm. As can be seen in Figure~\ref{fig:Exp}b for the DSGD method, even though the servers have better initial test accuracies to start with, the non-agreement between initial parameter estimates and large consensus errors eventually drives the cumulative test accuracy to $\sim93.27\%$. Moreover, the servers achieve consensus on parameter estimates only after 20 epochs. On the other hand, the proposed FxTS-DOA trades off initial dip in test accuracies for super fast consensus on network parameters, eventually resulting in improved cumulative performance.

The above results are quite significant since both the optimization and consensus are achieved in less than 20 epochs. This is particularly important for distributed training of neural networks, where simultaneous consensus on parameter estimates of nearly 2.8$\times 10^6$ parameters and gradients of private objective functions are being achieved in a distributed manner. The exchange of local estimates of parameters and gradients between any two neighbors occurs only once per epoch, i.e., the iteration complexity is only linear in the number of parameters (see lines 8-9 in Algorithm~\ref{alg:opt}), resulting in significantly lower computational overhead. Most existing approaches to distributed learning, such as All-Reduce~\cite{cho2019blueconnect} or distributed-SGD~\cite{dai2021distributed} assume the same initial parameter estimates while relying on the exchange of global gradient vector for achieving distributed optimization. On the other hand, we assume a distributed framework where each agent starts with its own parameter estimate and exchanges information with neighbors to arrive at a consensus on parameter and gradient vectors. Thus, the FxTS-DOA does not have to wait for the consensus to occur on the global gradient vector before agents or servers can update their parameters.

\section{Conclusions}\label{sec:conclusions}
This paper presented a scheme to solve a distributed convex optimization problem for continuous-time multi-agent systems with fixed-time convergence guarantees under various conditions on the team-objective function. We showed that even when the communication topology of the network varies with time, consensus on the state values, as well as the gradient and the Hessian (if required) of the function values, can be achieved in a fixed time. It is shown that each aspect of the algorithm, the consensus on the crucial information and convergence on the optimal value, are achieved in a fixed time. Finally, it is shown that when the proposed continuous-time scheme is discretized using Euler's method, the fixed-time convergence properties are preserved. This is also verified through various numerical studies. Future work involves investigating distributed optimization methods with fixed-time convergence guarantees with private convex constraints.


\appendix

\section{Proof of Lemma~\ref{lemma:app}}\label{app: lemma app proof}
\begin{proof}
The time derivative of $\theta_i$ is given by:
\begin{align*}
	\dot \theta_i & = p\sum_{j\in \mathcal N_i}\Big(\sign(\theta_j-\theta_i) + \gamma \sign^{\nu_1}(\theta_j-\theta_i) \\
	&\qquad +\delta \sign(\theta_j-\theta_i)^{\nu_2}\Big) + h_i.
\end{align*}
Define $\theta_{ji}\coloneqq \theta_j-\theta_i$ and $\theta_c \coloneqq \frac{1}{N}\sum_{j=1}^N\theta_j$, $i,j\in \mathcal V$. The difference between an agent $i$'s state $\theta_{i}$ and the mean value $\theta_c$ of all agents' states is denote by $\tilde{\theta}_i\coloneqq \theta_i-\theta_c$. Similarly, $\tilde{\theta}_{ji}$ represents the difference $(\tilde{\theta}_j-\tilde{\theta}_i)$. The time-derivative of $\tilde{\theta}_{i}$ is given by:
\begin{align}\label{eq:theta_i_tilde_dot}
	\dot{\tilde{\theta}}_i &= \omega_i + h_i - \frac{1}{N}\sum_{j=1}^N\omega_j - \frac{1}{N}\sum_{j=1}^Nh_j \nonumber\\
	&= \frac{1}{N}\sum_{j=1}^N\left(\omega_i-\omega_j\right) + \frac{1}{N}\sum_{j=1}^N\left(h_i-h_j\right)
\end{align}
Define the error vector $\tilde{\theta} = \begin{bmatrix}\tilde{\theta}_1 & \tilde{\theta}_2 & \cdots \tilde{\theta}_N\end{bmatrix}^{\intercal}$. Consider the candidate Lyapunov function defined as $V(\tilde \theta) = \frac{1}{2}\|\tilde \theta\|^2 =  \frac{1}{2}\sum_{i=1}^N\tilde{\theta}_i^{\intercal}\tilde{\theta}_i$. 
Taking its time-derivative along the trajectories of \eqref{eq:theta_i_tilde_dot} yields:
\begin{align}\label{eq:V1_V2}
	\dot{V} (\tilde \theta)
	&= \underbrace{\frac{1}{N}\sum\limits_{i,j=1}^N\tilde{\theta}_i^{\intercal}\left(\omega_i-\omega_j\right)}_{\dot{V}_1} + \underbrace{\frac{1}{N}\sum\limits_{i,j=1}^N\tilde{\theta}_i^{\intercal}\left(h_i-h_j\right)}_{\dot{V}_2}.
\end{align}
From \eqref{eq:omega_dot new}, the first term $\dot{V}_1$ is rewritten as:\small{
\begin{align}\label{eq:V1}
	\dot{V}_1 &= \frac{1}{N}\sum_{i=1}^N\tilde{\theta}_i^{\intercal}\sum_{j=1}^N\Big(\omega_i-p\sum_{k\in \mathcal{N}_j}\Big(\sign(\theta_k-\theta_j) \nonumber\\
	&\qquad + \gamma \sign^{\nu_1}(\theta_k-\theta_j) +\delta \sign(\theta_k-\theta_j)^{\nu_2}\Big)\Big) \nonumber\\
	&\overset{\eqref{sign 0}}{=} \frac{1}{N}\sum_{i=1}^N\tilde{\theta}_i^{\intercal}\sum_{j=1}^N\omega_i  = \sum_{i=1}^N\tilde{\theta}_i^{\intercal}\omega_i  = p\sum_{i=1}^N\tilde{\theta}_i^{\intercal}\sum_{j\in\mathcal{N}_i}\left(\sign(\theta_{ji})\right. + \nonumber\\
	&\qquad \left.\gamma \sign^{\nu_1}(\theta_{ji}) +\delta \sign(\theta_{ji})^{\nu_2}\right)\nonumber\\
	&\overset{\eqref{f ij  eij}}{=} \frac{p}{2}\sum_{i=1}^N\sum_{j\in\mathcal{N}_i}\tilde{\theta}_{ij}^{\intercal}\left(\sign(\theta_{ji}) + \gamma \sign^{\nu_1}(\theta_{ji}) +\delta \sign(\theta_{ji})^{\nu_2}\right) \nonumber,
\end{align}}\normalsize
where the last equality follows with $w(x) = x$ in \eqref{f ij  eij}. Using this, and the fact that $\sign(\theta_{ij})^l = -\sign(\theta_{ji})^l$ for any odd $l\geq 0$, we obtain\small{
\begin{align}
\dot V_1 & = -\frac{p}{2}\sum_{i=1}^N\sum_{j\in\mathcal{N}_i}\tilde{\theta}_{ij}^{\intercal}\left(\sign(\theta_{ij}) + \gamma \sign^{\nu_1}(\theta_{ij}) +\delta \sign(\theta_{ij})^{\nu_2}\right) \nonumber\\
	&= -\frac{p}{2}\sum_{i=1}^N\sum_{j\in\mathcal{N}_i}\left(\|\tilde{\theta}_{ij}\| + \gamma\|\tilde{\theta}_{ij}\|^{\nu_1+1} + \delta\|\tilde{\theta}_{ij}\|^{\nu_2+1}\right),
\end{align}}\normalsize
where the last equality follows from $\tilde{\theta}_{ij}=(\theta_{i}-\theta_{c})-(\theta_{j}-\theta_{c})=\theta_{ij}$. The second term in \eqref{eq:V1_V2} can be bounded as:
\begin{align}\label{eq:V2}
	\dot{V}_2 &= \frac{1}{2N}\!\sum_{i,j=1}^N\!\!\tilde{\theta}_{ij}^{\intercal}\left(h_i-h_j\right) \leq \frac{1}{2N}\!\sum_{i,j=1}^N\!\|\tilde{\theta}_{ij}\| \|h_i-h_j\| \nonumber\\
	&\leq \frac{\rho}{2N}\sum_{i,j=1}^N\|\tilde{\theta}_{ij}\| \leq \frac{\rho}{2N}\left(N\max_{i}\sum_{j=1, j\neq i}^N\|\tilde{\theta}_{ij}\|\right)\nonumber\\
	&\leq \frac{\rho}{2}\frac{(N-1)}{2} \sum_{i=1}^N\sum_{j\in\mathcal{N}_i}\|\tilde{\theta}_{ij}\|,
\end{align}
where the last inequality follows from connectivity of $\mathcal G$. 
Thus, from \eqref{eq:V1} and \eqref{eq:V2}, it follows that\small{
\begin{align*}
	\dot{V}(\tilde\theta) &\leq -\frac{1}{2}\left(p-\rho\frac{(N-1)}{2}\right)\sum_{i=1}^N\sum_{j \in \mathcal N_i}\|\tilde{\theta}_{ij}\|  \nonumber\\
	&\qquad -\frac{1}{2}p\gamma \sum_{i=1}^N\sum_{j \in \mathcal N_i}\|\tilde{\theta}_{ij}\|^{\nu_1+1} - \frac{1}{2}p\delta \sum_{i=1}^N\sum_{j \in \mathcal N_i}\|\tilde{\theta}_{ij}\|^{\nu_2+1} \nonumber \\
	&\leq - \frac{1}{2}p\gamma \sum_{i=1}^N\sum_{j \in \mathcal N_i}\|\tilde{\theta}_{ij}\|^{\nu_1+1} - \frac{1}{2}p\delta \sum_{i=1}^N\sum_{j \in \mathcal N_i}\|\tilde{\theta}_{ij}\|^{\nu_2+1} \nonumber \\
	&\leq - \frac{p\gamma}{2}\sum_{i=1}^N\sum_{j=1}^N\left(a_{ij}\|\tilde{\theta}_{ij}\|^2\right)^{\kappa_1} - \frac{p\delta}{2}\sum_{i=1}^N\sum_{j=1}^N\left(a_{ij}\|\tilde{\theta}_{ij}\|^2\right)^{\kappa_2},
\end{align*}}\normalsize
where $\kappa_1=\frac{1+\nu_1}{2}$, $\kappa_2=\frac{1+\nu_2}{2}$. Define $\eta_{ij}=a_{ij}\|\tilde{\theta}_{ij}\|^2$. With this, and using the fact that $\nu_1>1$ and $\nu_2<1$, we obtain:
\begin{align*}
	\dot{V}(\tilde\theta) &\leq -\frac{p\gamma}{2}\sum_{i=1}^N\sum_{j=1}^N\eta_{ij}^{\kappa_1} - \frac{p\delta}{2}\sum_{i=1}^N\sum_{j=1}^N\eta_{ij}^{\kappa_2} \nonumber\\
	&\overset{\eqref{p ineq}}{\leq} -\frac{p\gamma}{2}N^{2(1-\kappa_1)}\left(\sum_{i,j=1}^N\eta_{ij}\right)^{\kappa_1}\!\!-\frac{p\delta}{2}\left(\sum_{i,j=1}^N\eta_{ij}\right)^{\kappa_2}\!.
\end{align*}
We have 
\begin{align*}
	\sum_{i,j = 1}^N\eta_{ij} &= \sum_{i,j = 1}^Na_{ij}\|\tilde{\theta}_{ij}\|^2 \nonumber\\
	&= 2\tilde{\theta}^{\intercal}L_A\otimes I_{N}\tilde{\theta} \geq 2\lambda_2(L_A\otimes I_{N})\tilde{\theta}^{\intercal}\tilde{\theta} = c V, 
\end{align*}
where $c = 4\lambda_2(L_A)$. With this, we obtain that $$\dot V(\tilde\theta)  \leq-\frac{p\gamma}{2}N^{2(1-\kappa_1)}c^{\kappa_1}V(\tilde\theta)^{\kappa_1}-\frac{p\delta}{2}c^{\kappa_2}V(\tilde\theta)^{\kappa_2}.$$ With $\nu_1>1$, we have $\kappa_1>1$, and with $\nu_2<1$, we have $\kappa_2<1$. Hence, using Lemma~\ref{FxTS TH}, we obtain that $V(\tilde\theta(t)) = 0$, i.e., $\theta_i(t) = \theta_c(t)$, for all $t\geq T_\text{p}$, where $T_\text{p} = \frac{2}{p\gamma N^{2(1-\kappa_1)}c^{\kappa_1}(\kappa_1-1)}+\frac{2}{p\delta c^{\kappa_2}(1-\kappa_2)}$. Using the fact that $\sum\limits_{i = 1}^N\omega_i(t) = 0$ for all $t\geq 0$, we obtain that 
\begin{align*}
    \hspace{-10pt} \sum_{i =1}^N\dot \theta_i(t) =  & \sum_{i =1}^N\omega_i(t) +  \sum_{i =1}^Nh_i(t) =  \sum_{i =1}^N h_i(t)=  \sum_{i =1}^N\frac{d}{dt}\zeta_i(t),\\
    \implies &  \sum_{i =1}^N\theta_i(t) =  \sum_{i =1}^N\zeta_i(t) + c.
\end{align*}
With $\theta_i(0)\!=\!\zeta_i(0)$, we have $c\!=\!0$, which completes the proof. 
\end{proof}
\vspace{-1.5em}
\section{Proof of Theorem~\ref{thm: discretized}}\label{app: proof thm discretized}
\begin{proof}
First, consider the closed-loop dynamics \eqref{eq: dot z} for $t\leq T_\text{p}+T_\text{con}$. From Lemma~\ref{thm:param_estimation new}, it holds that the function $V_1(\theta) = \frac{1}{2}\|\tilde \theta\|^2$, where $\tilde \theta$ is as defined in Lemma~\ref{lemma:app} satisfies $\dot V_1(\theta(t))\leq -a_1V_1(\theta)^{\kappa_1}-a_2V_1(\theta)^{\kappa_2}$, 
where $a_1 = \frac{p\gamma}{2}N^{2(1-\kappa_1)}c^{\kappa_1}, a_2 = \frac{p\delta}{2}c^{\kappa_2}$, $\kappa_1 = \frac{1+\nu_1}{2}>1$, $\kappa_2 = \frac{1+\nu_2}{2}<1$ and $c = 4\lambda_2(L_A)$. Similarly, since $p = q$, $\alpha = \gamma$, $\beta  = \delta$, $\mu_1 = \nu_1$ and $\mu_2 = \nu_2$, the function $V_2(x) = \frac{1}{2}\|\tilde x\|^2$ where $\tilde x_i(t) = x_i(t) - \frac{1}{N}\sum_{j = 1}^Nx_j(t)$, satisfies $\dot V_2(x(t))\leq -a_1V_2(x)^{\kappa_1}-a_2V_2(x)^{\kappa_2}$. Now, define $z = \begin{bmatrix}x\\ \theta\end{bmatrix}$ and $V(z(t)) = V_1(\theta(t)) + V_2(x(t))$, so that it holds that{
\begin{align*}
    \dot V(z(t))& \leq -a_1V_1(\theta)^{\kappa_1}-a_2V_1(\theta)^{\kappa_2} -a_1V_2(x)^{\kappa_1}-a_2V_2(x)^{\kappa_2}\\& 
    = -a_1(V_1(\theta)^{\kappa_1}+V_2(x)^{\kappa_1}) -a_2(V_1(\theta)^{\kappa_2}+V_2(x)^{\kappa_2}),  
\end{align*}}\normalsize
for all $t\leq T_\text{p}+T_\text{con}$. Now, using Lemma~\ref{lemma:ineq}, it holds that $V_1(\theta)^{\kappa_1} + V_2(x)^{\kappa_1} \geq 2^{1-\kappa_1}(V_1(\theta)+V_2(x))^{\kappa_1} = 2^{1-\kappa_1}V(z)^{\kappa_1}$ and $V_1(\theta)^{\kappa_2} + V_2(x)^{\kappa_2} \geq (V_1(\theta)+V_2(x))^{\kappa_2} = V(z)^{\kappa_2}$. Thus, it holds that $ \dot V(z(t)) \leq -a_12^{1-\kappa_1}V(z(t))^{\kappa_1}-a_2V(z(t))^{\kappa_2}$ for all $t\leq T_\text{p}+T_\text{con}$. Hence, $z = \bar z$ is an FxTS equilibrium point of \eqref{eq: dot z} where $
\bar z = \begin{bmatrix}I_N\otimes\bar \theta\\I_N\otimes \bar  x\\\end{bmatrix}$ with $\bar \theta = \frac{1}{N}\sum_{j = 1}^N\theta_j$ and $\bar x = \frac{1}{N}\sum_{j = 1}^Nx_j$. 
Note also that $V_1$ and $V_2$ are quadratic in the error $\tilde \theta$ and $\tilde x$, respectively. Hence, it holds that $V(z(t))  = \frac{1}{2}\|\tilde \theta\|^2 + \frac{1}{2}\|\tilde x\|^2 \leq \frac{1}{2}\|\tilde z\|^2$ where $\tilde z = z-z = \begin{bmatrix}\tilde \theta \\ \tilde x\end{bmatrix}$. Furthermore, it holds that $V(z(t))  = \frac{1}{2}\|\tilde \theta(t)\|^2 + \frac{1}{2}\|\tilde x(t)\|^2\geq \frac{1}{4}\|\tilde z(t)\|^2$.
Now, consider the time interval $t\geq T_\text{p}+T_\text{con}$. The dynamics for $z$ reads $
\dot z(t) = F(z(t)) = \begin{bmatrix}F_2(x(t)) \\ F_4(x(t))\end{bmatrix}$, for $t\geq T_\text{p}+T_\text{con}$. Consider the Lyapunov candidate $V_3(z(t)) = \frac{1}{2}\sum_{i= 1}^N\|\nabla f_i(x(t))\|^2 + \frac{1}{2}\|\tilde \theta(t)\|^2$. Note that from Lemma~\ref{thm:param_estimation new}, it follows that $\tilde\theta(t) = 0$ for $t\geq T_\text{p}$. If $f_i$ satisfies Assumption~\ref{ass:2 new}, then from Lemma~\ref{thm:central new}, it follows that the time derivative of $V$ along the trajectories of $z$ for $t\geq \max\{T_1, T_2\}$ reads
\begin{align*}
    \dot V_3(z(t))& \leq -k2^\frac{1+l_1}{2}V_3(z(t))^{\frac{1+l_1}{2}}-k2^\frac{1+l_2}{2}V_3(z(t))^{\frac{1+l_2}{2}}\\
    & = -k2^{\kappa_1}V_3(z(t))^{\kappa_1}-k2^{\kappa_2}V_3(z(t))^{\kappa_2},
\end{align*}
since $l_1 = \mu_1$ and $l_2 = \mu_2$. Note that from \cite[Theorem 2]{karimi2016linear}, it follows that under strong convexity implies quadratic growth, and thus, we obtain that the function $V$ satisfies the quadratic growth requirement in \cite[Theorem 3]{garg2021MVIP}.
If, on the other hand, $f_i$ satisfies Assumption \ref{f assum 2}, then from Lemma~\ref{corollary central PL}, it follows that the time derivative of $V$ along the trajectories of $z$ for $t\geq \max\{T_1, T_2\}$ reads
\begin{align*}
    \dot V_3(z(t))& \leq -k_1^\frac{1+l_1}{2}V_3(z(t))^{\frac{1+l_1}{2}}-k_2^\frac{1+l_2}{2}V_3(z(t))^{\frac{1+l_2}{2}}\\
    & \leq -k_1^{\kappa_2}V_3(z(t))^{\kappa_1}-k_2^{\kappa_2}V_3(z(t))^{\kappa_2}
\end{align*}
where $k_1 = (2\mu)^{\frac{1+l_1}{2}}2^\frac{1+l_1}{2}$ and $k_2 = (2\mu)^{\frac{1+l_2}{2}}2^\frac{1+l_2}{2}$.  Choose $a = \min\{a_12^{1-\kappa_1}, k2^{\kappa_1}, k_1^{\kappa_1}\}$ and $b = \min\{a_2, k2^{\kappa_2}, k_2^{\kappa_2}\}$, so that it holds that $\dot V(z(t)) \leq -aV(z(t))^{\kappa_1}-bV(z(t))^{\kappa_2}$ for all $t\geq 0$. In this case as well, since the system trajectories evolve in a compact set $\{z\; |\; V(z)\leq V(z(0))\}$, from \cite[Theorem 2]{karimi2016linear}, it follows that the function $V$ satisfies the quadratic growth requirement in \cite[Theorem 3]{garg2021MVIP}. Thus, all the conditions of \cite[Theorem 3]{garg2021MVIP} are satisfied with $\beta = \frac{1}{2}$, and hence, \eqref{eq:disc z star} holds. 
\end{proof}

\bibliographystyle{ACM-Reference-Format}
\bibliography{myreferences}

\end{document}